\newtheorem{thm}{Theorem}
\newtheorem{lemma}{Lemma}
\newtheorem{remark}{Remark}
\newtheorem{mydef}{Definition}
\newcommand{\mb}{\mathbf}
\newcommand{\mc}{\mathcal}
\newcommand{\one}{\mathbf{1}}
\newcommand{\tree}{\mathbb{T}}
\newcommand{\var}{\mathrm{Var}}
\newcommand{\cov}{\mathrm{Cov}}
\newcommand{\transpose}{^{\tiny \top}}
\newcommand{\V}{\mathbf{V}}
\newcommand{\KL}{\text{KL}}
\newcommand{\TV}{\text{TV}}
\newcommand{\sign}{\text{sign}}
\title{When can we reconstruct the ancestral state? A unified theory}
\date{}
\author{
Lam Si Tung Ho \\
Department of Mathematics and Statistics \\
Dalhousie University, Halifax, Nova Scotia, Canada
\and
Vu Dinh \\
Department of Mathematical Sciences \\
University of Delaware
}
\begin{document}
\maketitle

\begin{abstract}
Ancestral state reconstruction is one of the most important tasks in evolutionary biology.
Conditions under which we can reliably reconstruct the ancestral state have been studied for both discrete and continuous traits.
%However, the connection between these results is unclear and it seems that each model needs to be studied separately.
However, the connection between these results is unclear, and it seems that each model needs different conditions.
In this work, we provide a unifying theory on the consistency of ancestral state reconstruction for various types of trait evolution models.
Notably, we show that for a sequence of nested trees with bounded heights, the necessary and sufficient conditions for the existence of a consistent ancestral state reconstruction method under discrete models, the Brownian motion model, and the threshold model are equivalent.
When tree heights are unbounded, we provide a simple counter-example to show that this equivalence is no longer valid.
\end{abstract}

\clearpage

\section{Introduction}

The evolution of biological features, such as genotypes and phenotypes, is often assumed to follow Markov processes along a phylogenetic tree \citep{felsenstein2004inferring}. Under these models, each internal node in the tree depicts a speciation event when an ancestral lineage splits into two new ones. 
The descendant lineages inherit the ancestral state of their most recent common ancestor and then evolve independently from each other \citep{steel2016phylogeny}. 
One important task in evolutionary biology is reconstructing the ancestral state from observations at the leaves of a given tree. 
This problem, usually referred to as \emph{ancestral state reconstruction} or \emph{root reconstruction}, helps answer many questions about the underlying evolutionary process and directly affects the efficiency and accuracy of other phylogenetic estimates \citep{maddison1994phylogenetic, liberles2007ancestral, thornton2004resurrecting, ho2021ancestral}.
One important application is to infer the origin of epidemics \citep{faria2014early, gill2017relaxed}.

In recent years, evolutionary data for a wide variety of species are increasingly available, and the problem of ancestral reconstruction based on hundreds or thousands of leaves is becoming commonplace. 
It is well-known that sampled data at the leaves of the tree cannot be considered independent since closely related species are expected to have similar characteristics \citep{felsenstein1985phylogenies}. 
Previous works in the field indicate that in this setting, basic statistical properties should not be taken for granted \citep{ane2008analysis, li2008more, ho2013asymptotic, ho2014intrinsic, ane2017phase, ho2021ancestral}. 
For example, one of the most desired properties of good estimation methods is consistency (which dictates that the estimator converges to the true value as the number of leaves increases), but even rigorous methods such as Maximum likelihood estimator (MLE) could be inconsistent in phylogenetic settings. 
Characterizing conditions under which the ancestral state can be reliably estimated has become an active research direction.

Perhaps, \citet{ane2008analysis} provides the most notable result for reconstructing the ancestral state of continuous traits.
In this paper, the author derives a necessary and sufficient condition for the consistency of the MLE under the Brownian motion (BM) model.
The condition involves the covariance matrix $\V_n$ whose components are the times of shared ancestry between leaves, that is, the element in $i$th-row and $j$-th column, $V_{ij}$, is the length shared by the paths from the root to the leaves $i$ and $j$.
Specifically, the MLE is consistent if and only if $\one\transpose \V_n^{-1} \one \to \infty$.
For discrete models, \cite{fan2018necessary} show that under a certain root density assumption, referred to as the big bang condition, it is possible to identify a subset of leaves that are sufficiently independent. This enables the derivations of a necessary and sufficient condition for the existence of a consistent ancestral state reconstruction method under discrete models.

Despite the usefulness of these results, the connection between them is unclear. In principle, they consider different stochastic processes, study distinct aspects of the problem and focus on conditions with seemingly unrelated mathematical formulations. For example, \cite{ane2008analysis} specifically studies the MLE, while \cite{fan2018necessary} consider a more abstract question of the existence of a consistent estimator. 
It seems that the consistency property for each trait evolution model needs different conditions.
In this work, we attempt to bridge this gap by showing that when the sequences of trees are nested and have bounded heights, which corresponds to the natural setting where data from new species are continually being collected, the two geometric conditions of \cite{ane2008analysis} and \cite{fan2018necessary} are equivalent. 
As a consequence, they are the necessary and sufficient condition for the existence of a consistent ancestral state reconstruction method under the BM model and discrete models.
We also show that the results extend to the threshold model \citep{felsenstein2012comparative,revell2014ancestral}, thus providing a unifying perspective on the consistency of ancestral state reconstruction procedures across a wide range of popular phylogenetic Markov processes.
Finally, we give a simple counter-example to show that when tree heights are unbounded, these conditions are not equivalent and neither of them is a sufficient condition for the existence of a consistent ancestral state reconstruction method under discrete models.

\section{Settings}

We consider a sequence of nested trees $\tree_n$, meaning $\tree_{n-1}$ is a subtree of $\tree_n$ for all $n$.
It is worth noticing that this setting represents the situation when more species are continually sampled and added to the data set.
This is a common setup for theoretical studies of trait evolution models \citep{fan2018necessary, ho2013asymptotic}.
We denote the observed trait values at the leaves of $\tree_n$ by $\mb Y_n = (Y_k)_{k=1}^n$.
Without loss of generality, we assume that $\tree_n$ has $n$ species and the root of all trees is the same species.
Furthermore, we assume that distances from this root to the leaves are uniformly bounded by $H$.
The goal of ancestral state reconstruction is to estimate the trait value of this root from the trait values at the leaves.

In this paper, we study three different types of trait evolution models: BM, discrete, and threshold models.
As we already discussed, all three (types of) models follow Markovian dynamics along phylogenetic trees where at each internal node, descendant lineages inherit the value from the parent lineage just prior to the speciation event. Conditional of their starting value, each lineage then evolves independently of the sister lineages. 

\paragraph{Brownian motion model}
The BM model assumes that a continuous phenotype evolves along a tree according to a Brownian motion.
Under the BM model, the observations $\mb Y_n = (Y_k)_{k=1}^n$ follow a Gaussian distribution $\mc{N}(\mu, \sigma^2 \V_n)$. 
Here, $\mu$ is the ancestral state, $\sigma^2$ is the variance of the BM, and $\V_n = (t_{ij})$ depends on the tree where $t_{ij}$ is the distance from the root to the most recent common ancestor of leaves $i$ and $j$ \citep{ane2008analysis}.
We visualize the evolution of a trait along a tree under the BM model in Figure \ref{fig:BM}.

\begin{figure}[h]
\centering
\includegraphics[width = 0.8\textwidth]{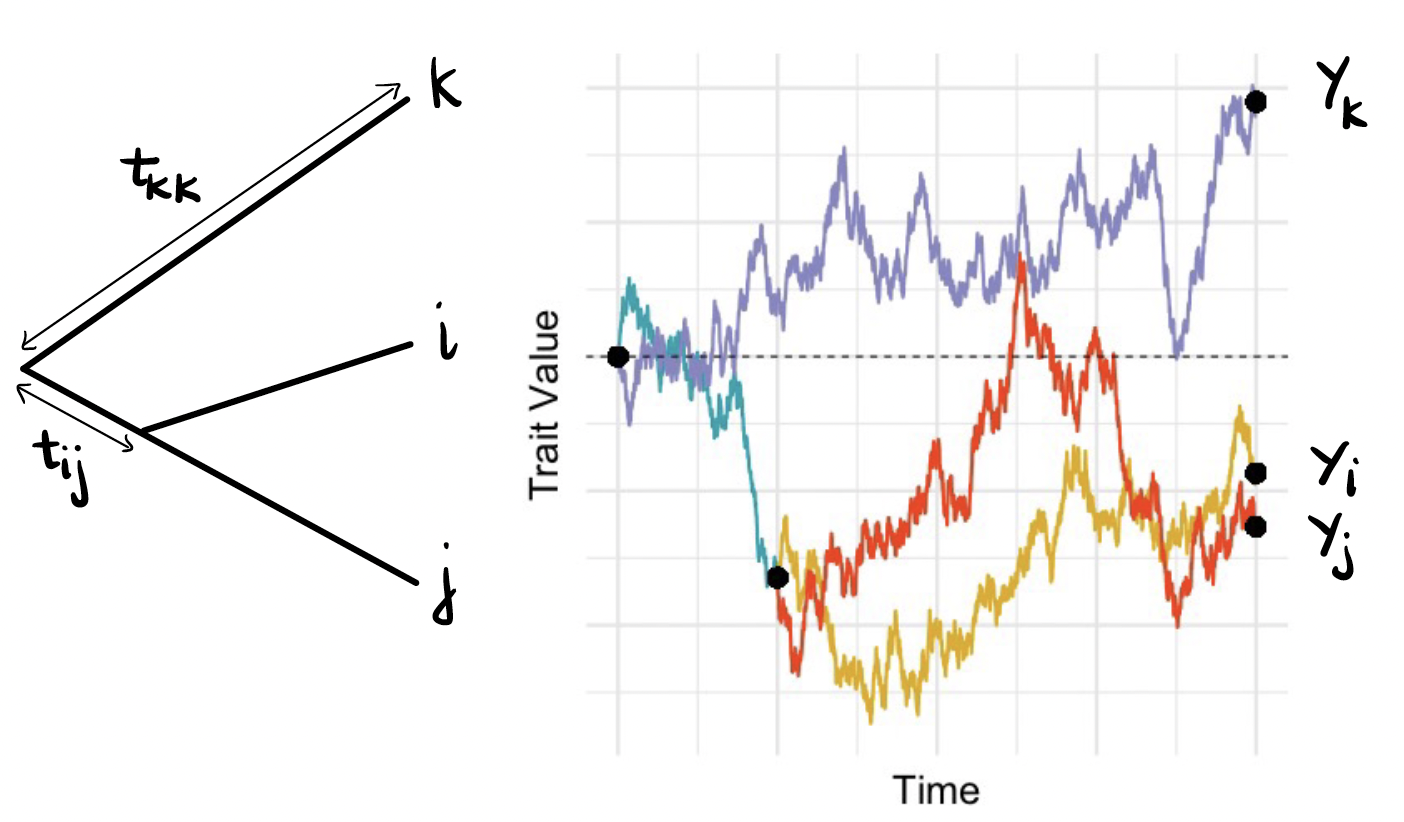}
\caption{Visualization of a BM process on a tree (right). The distance from the root to the most recent common ancestor of leaves $i$ and $j$ is $t_{ij}$, and the distance from the root to leaf $k$ is $t_{kk}$.}
\label{fig:BM}
\end{figure}

Maximum likelihood estimator (MLE) is the most popular method for reconstructing the ancestral state.
Under the BM model, the MLE has an analytic formula
\[
\hat \mu_n = (\one\transpose \V_n^{-1} \one)^{-1}(\one\transpose \V_n^{-1} \mb{Y}_n).
\]
\citet{ane2008analysis} provides the following necessary and sufficient condition for the consistency of the MLE:

\begin{lemma}[\citet{ane2008analysis}]
Under the BM model, the MLE of the ancestral state is consistent if and only if $\one\transpose\V_n^{-1}\one \to + \infty$.
\label{lem:Ane}
\end{lemma}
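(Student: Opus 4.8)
The plan is to use the explicit formula for $\hat\mu_n$ together with the Gaussian structure of $\mathbf{Y}_n$ to reduce consistency to the behavior of a single variance. Write $\hat\mu_n = c_n\transpose \mathbf{Y}_n$ with $c_n = (\one\transpose \V_n^{-1}\one)^{-1}\V_n^{-1}\one$. Since $\mathbf{Y}_n \sim \mc{N}(\mu\one, \sigma^2\V_n)$, the estimator $\hat\mu_n$ is itself Gaussian, and a direct computation gives $\E{\hat\mu_n} = c_n\transpose(\mu\one) = \mu$ (so $\hat\mu_n$ is unbiased) and
\[
\var(\hat\mu_n) = \sigma^2\, c_n\transpose \V_n c_n = \frac{\sigma^2}{\one\transpose \V_n^{-1}\one}.
\]
Hence $\hat\mu_n \sim \mc{N}\!\big(\mu,\, \sigma^2/(\one\transpose\V_n^{-1}\one)\big)$ for every $n$.

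It then remains to invoke the elementary fact that a sequence of Gaussian variables with constant mean $\mu$ converges to $\mu$ in probability if and only if its variance tends to $0$. The ``if'' direction is Chebyshev's inequality; for ``only if'', if the variances do not converge to $0$ there is a subsequence along which $\var(\hat\mu_n) \ge \delta > 0$, and then $\P{|\hat\mu_n - \mu| > \epsilon}$ is bounded below along that subsequence by a positive constant depending only on $\epsilon/\sqrt{\delta}$, contradicting consistency. Combining this with the variance formula, $\hat\mu_n$ is consistent if and only if $\sigma^2/(\one\transpose\V_n^{-1}\one) \to 0$, i.e.\ if and only if $\one\transpose\V_n^{-1}\one \to +\infty$.

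The argument is essentially a computation, so I do not expect a serious obstacle; the only points requiring care are confirming that the generalized least squares expression is genuinely the MLE of $\mu$ (which holds because the profile log-likelihood in $\mu$ does not depend on $\sigma^2$, so the minimizer of the quadratic form $(\mathbf{Y}_n - \mu\one)\transpose\V_n^{-1}(\mathbf{Y}_n - \mu\one)$ is the MLE whether or not $\sigma^2$ is treated as known) and being precise about the mode of convergence used in the definition of consistency. The substantive work of the paper lies not here but in later relating the analytic condition $\one\transpose\V_n^{-1}\one \to +\infty$ to the big-bang-type geometric conditions that govern the discrete and threshold models.
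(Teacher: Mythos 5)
Your proof is correct: the MLE of $\mu$ is the generalized least squares estimator, which is Gaussian, unbiased, with variance $\sigma^2/(\one\transpose\V_n^{-1}\one)$, so consistency holds precisely when this variance vanishes, i.e.\ when $\one\transpose\V_n^{-1}\one \to +\infty$. The paper itself states this lemma without proof, citing \citet{ane2008analysis}, and your argument is essentially the standard one used there, so there is nothing to flag beyond the care you already took in noting that the GLS expression is the MLE regardless of whether $\sigma^2$ is known.
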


\paragraph{Discrete models}
These models assume that traits evolve along the tree according to a continuous-time Markov chain with finite state-space.
\citet{fan2018necessary} focus on models that satisfy the ``initial-state identifiability": all rows of the transition probability matrix $\mb{P}_t$ of the Markov chain are distinct for all $t$.
Throughout the paper, we also require that for two states $i, j$ (do not necessarily distinct), $P_{ij}(t) > 0$ for some $t > 0$.
We refer to models satisfying those two conditions as \emph{regular discrete models}.
It is worth noticing that popular evolution models, such as two-state, Jukes-Cantor, and GTR \citep{felsenstein2004inferring} (with positive transition rates) are regular discrete models.
\citet{fan2018necessary} derive a necessary and sufficient condition for the existence of a consistent estimator for the ancestral state, called the big bang condition.
To understand the big bang condition, let us first introduce some notations.
For a tree $\tree$, a truncated tree at level $s$ of $\tree$, denoted by $\tree(s)$, is the tree obtained by truncating $\tree$ at distance $s$ from the root.
We denote the set of leaves of a tree $\tree$ by $\partial \tree$ and denote the cardinality of a set $A$ by $| A |$.

\begin{mydef}[Big bang condition]
A nested sequence of trees $(\tree_n)_{n=1}^\infty$ satisfies the big bang condition if for all $s > 0$, we have $| \partial \tree_n(s)| \to \infty$ as $n \to \infty$.
\end{mydef}

\begin{lemma}[\citet{fan2018necessary}]
Under regular discrete models, there exists a consistent estimator for the ancestral state if and only if the big bang condition holds.
\label{lem:Roch}
\end{lemma}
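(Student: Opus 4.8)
This result is due to \citet{fan2018necessary}; the plan below indicates how one would prove it, handling the two implications by complementary techniques. For \emph{necessity}, I would argue by a bottleneck / data-processing bound. Suppose the big bang condition fails, so there are $s_0>0$, a constant $M$, and a subsequence along which $|\partial\tree_n(s_0)|\le M$; since further truncation cannot increase the number of boundary points, $|\partial\tree_n(s)|\le M$ for every $s\le s_0$ along that subsequence. Let $\mathbf{W}_n$ collect the states of the process at the (at most $M$) points of $\partial\tree_n(s_0)$. By the Markov property $\mb Y_n$ is conditionally independent of the root state given $\mathbf{W}_n$, so the data-processing inequality yields $\TV(\mathrm{law}(\mb Y_n\mid a),\mathrm{law}(\mb Y_n\mid b))\le\TV(\mathrm{law}(\mathbf{W}_n\mid a),\mathrm{law}(\mathbf{W}_n\mid b))$ for all root states $a\ne b$. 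The point is then to bound the right-hand side away from $1$ uniformly in $n$: $\tree_n(s_0)$ is a binary tree with at most $M$ leaves, hence has one of finitely many topologies, with all edge lengths in $[0,s_0]$; for a regular discrete model $P_{ij}(t)>0$ for every $t>0$ (by the standard dichotomy for continuous-time chains, since positivity at one time forces it at all times); so $\mathrm{law}(\mathbf{W}_n\mid\rho)$ depends continuously on finitely many edge lengths ranging over a compact set, and I would check that the overlap $1-\TV(\mathrm{law}(\mathbf{W}_n\mid a),\mathrm{law}(\mathbf{W}_n\mid b))$ is strictly positive at every edge-length configuration, including degenerate ones, where contracting zero-length edges leaves a nonempty common support on which both laws remain positive. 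Compactness then gives a uniform lower bound $c(s_0,M)>0$, so the leaf-data laws stay at total variation $\le 1-c(s_0,M)$ along the subsequence, whereas a consistent estimator would force this quantity to tend to $1$ for each pair $a\ne b$.

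For \emph{sufficiency}, I would construct an explicit, robustly aggregated estimator. Fix a small level $s$ and set $\varepsilon(s)=\sup_{t\le s}\max_a\bigl(1-P_{aa}(t)\bigr)\to0$ as $s\to0$. Big bang gives $m:=|\partial\tree_n(s)|\to\infty$; attach to each boundary point $w$ a leaf $\ell_w$ of $\tree_n$ (namely $w$ itself if it is already a leaf, else any leaf below it), let $d_w\in[0,H]$ be the length from $w$ to $\ell_w$, and one-hot encode $X_{\ell_w}$ as $\psi_w$. Given the root state $\rho$, the expected number of $w$ with $X_w\ne\rho$ is at most $m\,\varepsilon(s)$, so by Markov's inequality at most $\sqrt{\varepsilon(s)}\,m$ are ``bad'' with probability $\ge1-\sqrt{\varepsilon(s)}$; conditioning further on $(X_w)_w$, the subtrees below the remaining ``good'' boundary points are independent, contributing $\psi_w$ with conditional mean $\mathbf{P}_{d_w}(\rho,\cdot)$. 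Initial-state identifiability together with continuity and $\mathbf{P}_0=I$ gives a uniform separation $\TV(\mathbf{P}_t(a,\cdot),\mathbf{P}_t(b,\cdot))\ge\tau>0$ over all $t\in[0,H]$ and $a\ne b$ (it only improves as $t\to0$). For each pair $a\ne b$ I would form the bounded statistic $v^{ab}_w=\langle\psi_w,\sign(\mathbf{P}_{d_w}(a,\cdot)-\mathbf{P}_{d_w}(b,\cdot))\rangle$, whose conditional mean given $X_w=c$ is tree-computable and whose values at $c=a$ and $c=b$ differ by $2\,\TV(\mathbf{P}_{d_w}(a,\cdot),\mathbf{P}_{d_w}(b,\cdot))\ge2\tau$; a Hoeffding bound over the good terms then shows that comparing $\frac1m\sum_w v^{ab}_w$ to the tree-computable midpoint threshold decides ``$\rho=a$ versus $\rho=b$'' with error $o(1)+O(\sqrt{\varepsilon(s)})$ once $\varepsilon(s)$ is small relative to $\tau$. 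Taking $\hat\rho_n$ to be the state winning all its pairwise comparisons and a union bound gives $\P{\hat\rho_n=\rho}\ge1-\sqrt{\varepsilon(s)}-o(1)$ as $n\to\infty$; a diagonal argument sending $s\to0$ then produces a single consistent estimator.

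The main obstacle will be the sufficiency construction. Individual leaves are only weakly informative about the root (observed after time up to $H$), the informative subtrees make up merely a $(1-\sqrt{\varepsilon(s)})$-fraction whose identity is unknown, and branch lengths are heterogeneous, so the estimator must be both robust to a small adversarial fraction and built from bounded statistics --- which is why one works with one-hot encodings and pairwise sign-tests rather than raw log-likelihood ratios, the latter being unbounded as a branch length shrinks. On the necessity side the only delicate point is the uniform-in-$n$ overlap bound when edge lengths in $\tree_n(s_0)$ degenerate, which the compactness-and-positivity argument above is designed to handle.
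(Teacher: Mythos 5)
The paper does not actually prove this lemma: it is imported verbatim from \citet{fan2018necessary} (the only in-paper remark is that the ``downstream disjointness'' hypothesis of that work can be dropped for regular models), so there is no in-paper proof to compare against. Measured against the cited argument, your sketch follows essentially the same strategy: necessity via a bottleneck/data-processing bound through the finitely many points of $\partial\tree_n(s_0)$, and sufficiency by selecting one leaf below each boundary point of a shallow truncation $\tree_{n}(s)$, conditioning on the (mostly uncorrupted) states at those points, and exploiting conditional independence of the subtrees together with a uniform separation of the rows of $\mb{P}_t$ for $t\in[0,H]$; the bounded-height assumption of this paper's setting is exactly what makes that separation $\tau>0$ uniform, and your robust pairwise sign tests followed by a diagonalization in $s$ are a legitimate way to finish.

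One step in your necessity argument is misstated, though easily repaired. It is not true that the overlap $1-d_{\TV}$ is strictly positive at \emph{every} degenerate edge-length configuration: if the entire path from the root to some boundary point contracts to length zero, the corresponding coordinate of $\mathbf{W}_n$ equals the root state, the two laws of $\mathbf{W}_n$ are mutually singular, and your compactness argument as written fails to give a uniform bound. In the nested setting this situation cannot occur and no compactness is needed: $|\partial\tree_n(s_0)|$ is non-decreasing in $n$, so if the big bang condition fails the truncated tree $\tree_n(s_0)$ is eventually a fixed finite tree whose boundary points sit at fixed positive distances from the root; regularity (which forces $P_{ij}(t)>0$ for all $t>0$) then makes the two conditional laws of $\mathbf{W}_n$ mutually absolutely continuous, so their total variation is a single constant strictly below $1$ for all large $n$. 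With that replacement, your proposal is a faithful reconstruction of the cited result.
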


We note that the ``downstream disjointness" condition in \citet{fan2018necessary} is not satisfied for regular discrete models and can be removed.

\paragraph{Threshold model}

Threshold model \citep{felsenstein2012comparative,revell2014ancestral} assumes that a binary phenotype ($\pm 1$) is driven by an underlying process that evolves along a tree according to a BM.
Let $\mb{Z}_n$ be the underlying process and $\mb{Y}_n$ be the observations at the leaves of the tree. 
Under threshold model, $\mb Z_n \sim \mc{N}(\mu, \sigma^2 \V_n)$ and 
\[
Y_i = 
\begin{cases}
1 & Z_i \geq 0 \\
-1 & Z_i < 0.
\end{cases}
\]
Figure \ref{fig:thres} visualizes the evolution of the underlying BM process and the corresponding observations.

We want to estimate the ancestral state at the root $\rho = \text{sign}(\mu)$.
To the best of our knowledge, there are no theoretical results for the problem of reconstructing the ancestral state under this threshold model.

\begin{figure}[h]
\centering
\includegraphics[width = 0.6\textwidth]{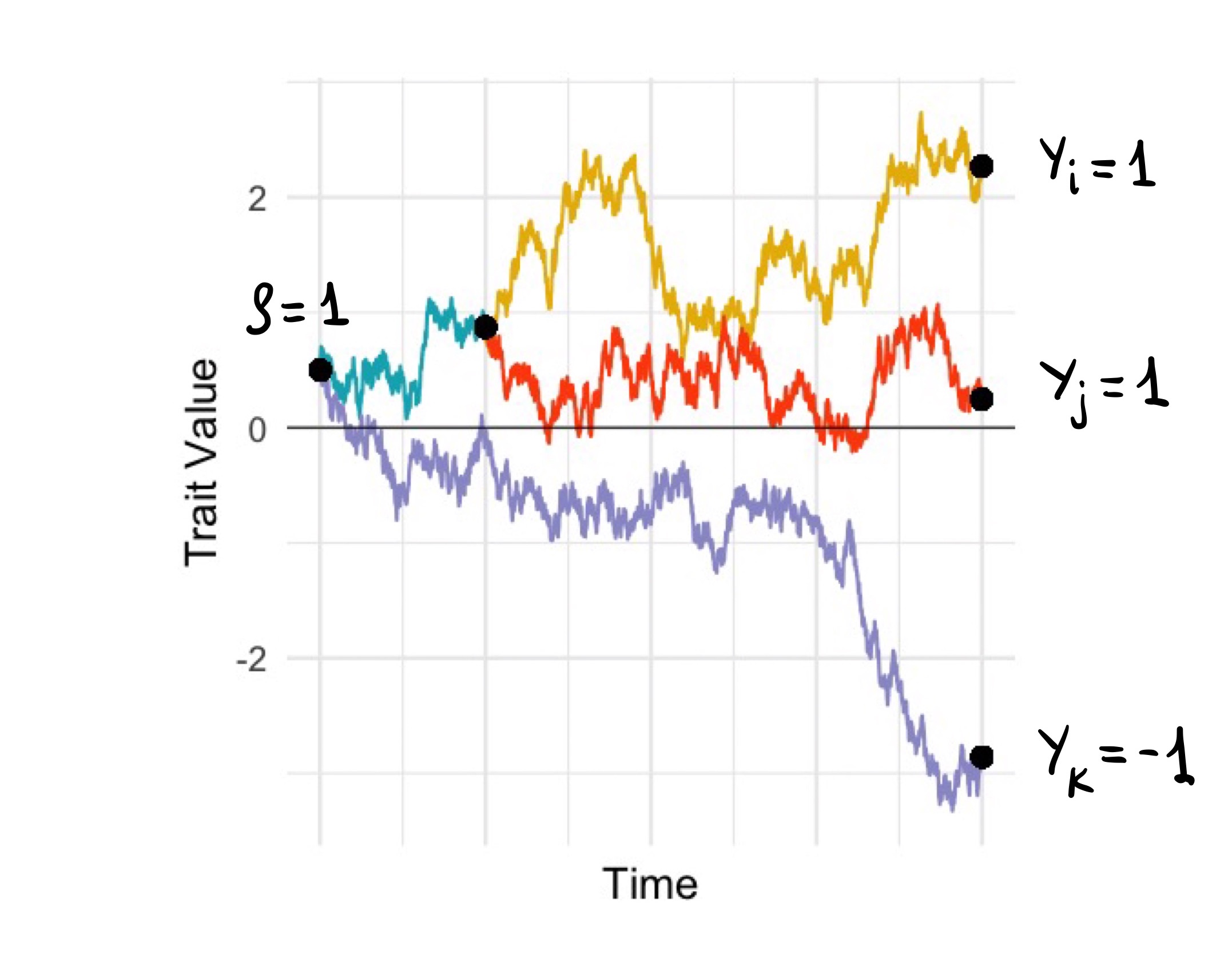}
\caption{Visualization of an underlying BM process and corresponding observations under the threshold model.}
\label{fig:thres}
\end{figure}

\section{Necessary and sufficient condition for consistency of ancestral state reconstruction}

While the results of Lemmas \ref{lem:Ane} and \ref{lem:Roch} are very useful, the connection between them is unclear. The derived conditions of the two models focus on seemingly unrelated mathematical formulations, and it seems that the consistency property for each model needs to be studied separately.
In this work, we aim to bridge this gap by showing that in our setting, the two geometric conditions for discrete and continuous models are equivalent. We then extend the results to threshold models to showcase the generalizability of the result across a wide range of popular phylogenetic Markov processes.

\begin{thm}
Under our settings (a sequence of nested trees with bounded heights),
\begin{itemize}
\item The big bang condition is equivalent with the condition $\one\transpose\V_n^{-1}\one \to + \infty$.
\item These conditions are the necessary and sufficient condition for the existence of a consistent estimator for the ancestral state under the BM, regular discrete, and threshold models.
\end{itemize}
\label{thm:main}
\end{thm}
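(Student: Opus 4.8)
The plan is to establish the theorem in three stages: first the equivalence of the two geometric conditions, then the characterization for the BM and discrete models (which follows by combining Lemmas~\ref{lem:Ane} and \ref{lem:Roch} with the equivalence), and finally the extension to the threshold model.

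\medskip

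\noindent\textbf{Step 1: Equivalence of the big bang condition and $\one\transpose\V_n^{-1}\one\to\infty$.}
The key observation is that $\one\transpose\V_n^{-1}\one$ admits the variational interpretation
\[
(\one\transpose\V_n^{-1}\one)^{-1} = \min_{\mathbf{w}\transpose\one = 1}\ \mathbf{w}\transpose\V_n\,\mathbf{w},
\]
i.e.\ it is the smallest possible variance of a weighted average $\mathbf{w}\transpose\mathbf{Y}_n$ of the leaf values, which is exactly the variance of the BM-MLE of the root. So $\one\transpose\V_n^{-1}\one\to\infty$ is equivalent to saying there exist weight vectors $\mathbf{w}_n$ summing to $1$ with $\var(\mathbf{w}_n\transpose\mathbf{Y}_n)\to 0$. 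To prove ``big bang $\Rightarrow$ $\one\transpose\V_n^{-1}\one\to\infty$'': fix $s>0$; the leaves of $\tree_n$ fall into the clades hanging below the nodes of the truncated tree $\tree_n(s)$. The covariance of two leaves in different such clades is at most $s$, and within a clade at most $H$. By the big bang condition there are $m_n := |\partial\tree_n(s)|\to\infty$ clades; pick one leaf from each and weight them equally by $1/m_n$. A direct computation bounds the resulting variance by $s + (H-s)/m_n \le s + H/m_n$, so $\limsup_n (\one\transpose\V_n^{-1}\one)^{-1} \le s$; since $s$ is arbitrary, $\one\transpose\V_n^{-1}\one\to\infty$. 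For the converse, suppose the big bang condition fails, so there is $s>0$ and a subsequence along which $|\partial\tree_n(s)|$ stays bounded by some fixed $K$. Then all $n$ leaves partition into at most $K$ clades, each contained in a subtree of height $\le H-s$ rooted at depth $\ge s$. For any weight vector $\mathbf{w}$ with $\mathbf{w}\transpose\one=1$, write $\mathbf{w} = \sum_{c} \mathbf{w}^{(c)}$ over clades and let $a_c$ be the sum of weights in clade $c$; then $\mathbf{w}\transpose\V_n\mathbf{w} \ge s(\sum_c a_c)^2 + (\text{nonneg.\ within-clade terms}) \ge s$, using that $t_{ij}\ge s$ whenever $i,j$ lie in (possibly the same) clade and $\sum_c a_c = 1$. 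Hence $(\one\transpose\V_n^{-1}\one)^{-1}\ge s$ along the subsequence, so $\one\transpose\V_n^{-1}\one\not\to\infty$. Here is where boundedness of the heights is used implicitly: without it, a single tall clade could contribute arbitrarily large variance and the ``within-clade'' bound $H$ in the forward direction would fail.

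\medskip

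\noindent\textbf{Step 2: BM and discrete models.}
For the BM model, Lemma~\ref{lem:Ane} says the MLE is consistent iff $\one\transpose\V_n^{-1}\one\to\infty$; and if no consistent estimator exists then in particular the MLE is not consistent, while if $\one\transpose\V_n^{-1}\one\to\infty$ the MLE is a consistent estimator. So the condition characterizes existence of a consistent estimator under BM. For regular discrete models, Lemma~\ref{lem:Roch} gives exactly that a consistent estimator exists iff the big bang condition holds, which by Step~1 is the same condition. This part is essentially bookkeeping.

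\medskip

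\noindent\textbf{Step 3: Threshold model.}
Here I expect the main work. For \emph{sufficiency}, assume the big bang condition. Using the equal-weight construction from Step~1, pick one leaf per clade of $\tree_n(s)$; the underlying Gaussians $Z_i$ at these $m_n$ chosen leaves have pairwise covariance $\le s$ and variance $\le H$, so their correlations are $O(s/\text{something})$ — more cleanly, one shows the empirical average $\frac1{m_n}\sum Y_i$ concentrates (its variance is controlled by the average pairwise covariance of the $Y_i$'s, which is bounded by a function of $s$ via a monotonicity/Slepian-type comparison since $\cov(Y_i,Y_j)$ is an increasing function of $\cov(Z_i,Z_j)$). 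Taking $s\to 0$ slowly with $n$, the sign of a suitable thresholded statistic recovers $\sign(\mu)$ consistently — except on the measure-zero-ish boundary $\mu=0$, which requires a small separate argument or the standard convention. For \emph{necessity}, assume the big bang condition fails: along a subsequence the leaves lie in $\le K$ clades of bounded height. Then $\mathbf{Y}_n$ is a deterministic function of the $\le K$ clade-root values of the underlying BM together with independent within-clade noise; as $n\to\infty$ the observation vector does not accumulate information beyond these $K$ Gaussian variables (each of bounded variance), so the total variation distance between the laws of $\mathbf{Y}_n$ under $\mu$ and under $-\mu$ (for suitable small $\mu\ne 0$) stays bounded away from $1$, forcing any estimator of $\sign(\mu)$ to err with probability bounded away from $0$ — this is a Le~Cam two-point argument. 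The obstacle is making the information-theoretic bottleneck rigorous for the \emph{discrete-valued} observations $Y_i$ (as opposed to the Gaussians $Z_i$), and handling the correlation structure within each clade; I would reduce to the Gaussian case by a data-processing inequality ($\mathbf{Y}_n$ is a function of $\mathbf{Z}_n$) combined with the known BM necessity argument from \citet{ane2008analysis} / \citet{fan2018necessary} applied to the at-most-$K$ effective coordinates.

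\medskip

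\noindent The main obstacle overall is Step~3's threshold-model analysis — in particular the sufficiency direction, where the discreteness of $Y_i$ breaks the clean linear-algebra identity available for BM, so one must instead argue via concentration of a sign-type statistic and a Gaussian-comparison inequality to control the correlations introduced by shared ancestry.
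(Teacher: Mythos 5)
The substantive gap is in your Step 2: you claim the BM characterization follows from Lemma~\ref{lem:Ane} alone, but your argument only establishes the sufficiency direction. The sentence ``if no consistent estimator exists then in particular the MLE is not consistent'' is the wrong implication: what is needed is that when $\one\transpose\V_n^{-1}\one$ stays bounded, \emph{no} estimator whatsoever is consistent, and the inconsistency of the MLE does not rule out other estimators. This is exactly the content of the paper's Theorem~\ref{thm:BMnecessary}, proved by computing $\KL(P_{\mu_1,\sigma^2},P_{\mu_2,\sigma^2})=\tfrac{1}{2\sigma^2}\,\one\transpose\V_n^{-1}\one\,(\mu_1-\mu_2)^2$, bounding the total variation distance away from $1$ via Vajda's inequality, and running a two-point Le Cam argument. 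You gesture at precisely this kind of argument in Step 3 for the threshold model, but you attribute the Gaussian ingredient to a ``known BM necessity argument'' in the literature; Lemma~\ref{lem:Ane} is only a statement about the MLE, so without supplying the KL/TV computation yourself both the BM necessity and the threshold necessity (which the paper gets by data processing, $\KL(Q_{\mu_1,\sigma^2},Q_{\mu_2,\sigma^2})\le\KL(P_{\mu_1,\sigma^2},P_{\mu_2,\sigma^2})$, in Theorem~\ref{thm:thresholdneccesary}) remain unproved.

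Two further points. In Step 1 your converse inequality is false as stated: for leaves $i,j$ in \emph{different} clades of $\tree_n(s)$ one has $t_{ij}\le s$ (possibly $0$), not $t_{ij}\ge s$, so the claimed bound $\mathbf{w}\transpose\V_n\mathbf{w}\ge s(\sum_c a_c)^2=s$ fails (e.g.\ two clades attached at depth $s$ to the root give $\V_n=sI_2$ and $\mathbf{w}=(1/2,1/2)$ yields $s/2$). The fix is the block structure the paper exploits: with at most $K$ boundary nodes, nestedness makes the top tree eventually constant, so all boundary nodes sit at depth at least some fixed $\epsilon>0$ and $\V_n\succeq\epsilon B$ with $B$ the same-clade indicator matrix, whence $\mathbf{w}\transpose\V_n\mathbf{w}\ge\epsilon\sum_c a_c^2\ge\epsilon/K$; this is the content of the paper's Woodbury computation $\one\transpose\V_n^{-1}\one\le\ell/\epsilon$ in Theorem~\ref{thm:BMbigbang}. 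Your forward direction of Step 1 is correct and is in fact a more direct (variational) route to the equivalence than the paper's, which goes through existence of consistent estimators. Finally, your Step 3 sufficiency is sketchy exactly where the work lies: bounding $\cov(Y_i,Y_j)$ requires controlling the \emph{correlation} of the underlying Gaussians, and hence a lower bound on the leaf depths; the paper handles this by a case split (if some leaf depths tend to $0$, that single leaf is already a consistent estimator; otherwise depths are at least $\alpha>0$ and Lancaster's maximal-correlation bound, Lemma~\ref{lem:boundcorr}, gives $|\cov(Y_i,Y_j)|\le t_{ij}/\sqrt{t_i t_j}\le 1/(m\alpha)$), together with the bounded-height mean bound $E(\overline Y_m)\ge 2\Phi(\mu/(\sigma\sqrt{H}))-1>0$. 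Your Slepian-type monotonicity idea could be made rigorous, but ``$O(s/\text{something})$'' hides precisely the degenerate-depth issue that the case split resolves.
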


The flow of our proofs is as follows.
First, we prove that the condition $\one\transpose\V_n^{-1}\one \to + \infty$ is a necessary condition for the existence of a consistent estimator for the ancestral state under the BM model (Theorem \ref{thm:BMnecessary}).
Together with Lemma \ref{lem:Ane}, we conclude that $\one\transpose\V_n^{-1}\one \to + \infty$ is also the necessary and sufficient condition.
Next, we show that the big bang condition is also the necessary and sufficient condition for the existence of a consistent estimator for the ancestral state under the BM model (Theorem \ref{thm:BMbigbang}).
Therefore, big bang condition and the condition $\one\transpose\V_n^{-1}\one \to + \infty$ are equivalent.
Finally, we prove that under the threshold model, the condition $\one\transpose \V_n^{-1} \one \to + \infty$ is a necessary condition (Theorem \ref{thm:thresholdneccesary}) and the big bang condition is the sufficient condition (Theorem \ref{thm:thresholdsufficient}) for the existence of a consistent estimator for the ancestral state .

\subsection{Equivalence of consistency condition for discrete and continuous models}

\begin{thm}
Under the BM model, a necessary condition for the existence of a consistent estimator for the ancestral state is $\one\transpose\V_n^{-1}\one \to + \infty$.
\label{thm:BMnecessary}
\end{thm}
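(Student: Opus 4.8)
The plan is to argue by contraposition: suppose $\one\transpose\V_n^{-1}\one \not\to +\infty$, so along some subsequence $\one\transpose\V_n^{-1}\one \le C$ for a constant $C$, and show that no consistent estimator of $\mu$ can exist. The natural route is an information-theoretic obstruction. Under the BM model $\mb Y_n \sim \mc N(\mu, \sigma^2\V_n)$, so the whole family is Gaussian with a common covariance and a mean that shifts by $\mu\one$. The key quantity is the total variation (or KL) distance between the laws under two different root values $\mu$ and $\mu'$: for Gaussians with the same covariance, $\KL\big(\mc N(\mu\one,\sigma^2\V_n)\,\|\,\mc N(\mu'\one,\sigma^2\V_n)\big) = \frac{(\mu-\mu')^2}{2\sigma^2}\,\one\transpose\V_n^{-1}\one$. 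Hence if $\one\transpose\V_n^{-1}\one$ stays bounded along a subsequence, the KL divergence (and therefore the TV distance) between the two hypotheses stays bounded away from $1$ uniformly in $n$ along that subsequence, no matter how we pick $\mu \ne \mu'$ — we just pick them at a fixed separation.

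From there I would invoke the standard two-point (Le Cam) argument: if two parameter values $\mu_0 \ne \mu_1$ induce laws $P^{(n)}_{\mu_0}, P^{(n)}_{\mu_1}$ with $\TV(P^{(n)}_{\mu_0}, P^{(n)}_{\mu_1}) \le 1-\delta$ for all $n$ in an infinite set, then for any estimator $\hat\mu_n$ we have $\max_{i\in\{0,1\}} P^{(n)}_{\mu_i}\{|\hat\mu_n - \mu_i| \ge |\mu_0-\mu_1|/2\} \ge \delta/2$ along that subsequence, so $\hat\mu_n$ cannot converge in probability to the truth under both $\mu_0$ and $\mu_1$. This contradicts consistency (which must hold for every fixed value of the parameter, in particular for both $\mu_0$ and $\mu_1$). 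Therefore consistency forces $\one\transpose\V_n^{-1}\one \to +\infty$.

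The steps in order: (1) reduce to a subsequence on which $\one\transpose\V_n^{-1}\one \le C$; (2) compute the KL divergence between the two Gaussian hypotheses and bound it by $(\mu_0-\mu_1)^2 C/(2\sigma^2)$; (3) convert to a TV bound via Pinsker, $\TV \le \sqrt{\KL/2}$, and choose $|\mu_0-\mu_1|$ small enough (but fixed) that this is $\le 1-\delta$ for some $\delta>0$; (4) run the two-point Le Cam bound to conclude no estimator is consistent at both $\mu_0$ and $\mu_1$. The main obstacle, such as it is, is mostly a matter of care rather than depth: one must make sure the definition of ``consistent estimator'' being used is the pointwise-in-$\mu$ one (so that a single estimator is required to work simultaneously at $\mu_0$ and $\mu_1$), and one should double-check that $\V_n$ is invertible (which holds as soon as the leaves are distinct, and otherwise one passes to a maximal linearly independent subset of coordinates without changing the achievable KL). No boundedness of tree heights is needed for this direction — that hypothesis enters only elsewhere in Theorem~\ref{thm:main}.
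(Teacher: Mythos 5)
Your proposal is correct and follows essentially the same route as the paper: compute the Gaussian KL divergence $\frac{(\mu_1-\mu_2)^2}{2\sigma^2}\one\transpose\V_n^{-1}\one$, deduce that the total variation distance between the two hypotheses stays bounded away from $1$ when $\one\transpose\V_n^{-1}\one$ is bounded, and derive a contradiction with consistency via a two-point argument. The only cosmetic differences are that the paper bounds TV via Vajda's inequality rather than Pinsker-with-small-separation, and phrases the contradiction through the event $\{\hat\mu_n\to\mu_1\}$ rather than the Le Cam two-point risk bound.
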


\begin{proof}
We only need to prove that if there exists a constant $C > 0$ such that $\one\transpose\V_n^{-1}\one \le C$ for all $n$, then there is no consistent estimator for the ancestral state.
Let $P_{\mu, \sigma^2}$ be the joint distribution of the observations $\mb{Y}_n$ under the BM model with mean $\mu$ and variance $\sigma^2$.
We have 
\[
\KL(P_{\mu_1,\sigma^2}, P_{\mu_2,\sigma^2}) = \frac{1}{2 \sigma^2} \one\transpose \V_n^{-1} \one(\mu_1 - \mu_2)^2 \leq \frac{C}{2 \sigma^2} (\mu_1 - \mu_2)^2.
\]
Here, $\KL(P_{\mu_1,\sigma^2}, P_{\mu_2,\sigma^2})$ denotes the Kullback-Leibler divergence from $P_{\mu_2,\sigma^2}$ to $P_{\mu_1,\sigma^2}$.
Let $d_{\TV}(P_{\mu_1,\sigma^2}, P_{\mu_2,\sigma^2})$ be the total variation distance between $P_{\mu_1,\sigma^2}$ and $P_{\mu_2,\sigma^2}$.
That is, $d_{\TV}(P_{\mu_1,\sigma^2}, P_{\mu_2,\sigma^2}) = \sup_{\mc{A}} | P_{\mu_1,\sigma^2}(\mc{A}) - P_{\mu_2,\sigma^2}(\mc{A}) |$.
Applying Vajda's inequality \citep{vajda1970note}, we have
\[
\KL(P_{\mu_1,\sigma^2}, P_{\mu_2,\sigma^2}) \geq \log \left ( \frac{1 + d_{\TV}(P_{\mu_1,\sigma^2}, P_{\mu_2,\sigma^2})}{1 - d_{\TV}(P_{\mu_1,\sigma^2}, P_{\mu_2,\sigma^2})} \right ) - \frac{2 d_{\TV}(P_{\mu_1,\sigma^2}, P_{\mu_2,\sigma^2})}{d_{\TV}(P_{\mu_1,\sigma^2}, P_{\mu_2,\sigma^2}) + 1}.
\]
Hence, $d_{\TV}(P_{\mu_1,\sigma^2}, P_{\mu_2,\sigma^2}) \leq d_0 < 1$.
Assume that there exists a consistent estimator $\hat \mu_n$ for the ancestral state. 
Define $\mc{A} = \{ \hat \mu_n \to \mu_1 \}$, we have
\[
P_{\mu_1,1}(\mc{A}) \to 1 \quad \text{and} \quad P_{\mu_2,1}(\mc{A}) \to 0,
\]
where $\mu_1 \ne \mu_2$.
Therefore,
\[
d_{\text{TV}}(P_{\mu_1,\sigma^2}, P_{\mu_2,\sigma^2}) \geq | P_{\mu_1,1}(\mc{A}) - P_{\mu_2,1}(\mc{A})| \to 1.
\]
This is a contradiction.
Therefore, there is no consistent estimator for the ancestral state. 
\end{proof}

\begin{thm}
Under the BM model, there exists a consistent estimator for the ancestral state if and only if the big bang condition holds.
\label{thm:BMbigbang}
\end{thm}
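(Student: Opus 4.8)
The plan is to peel off the probabilistic content first and reduce everything to a deterministic statement about the matrices $\V_n$. Combining Lemma \ref{lem:Ane} (so $\one\transpose\V_n^{-1}\one\to+\infty$ already yields a consistent estimator, namely the MLE) with Theorem \ref{thm:BMnecessary} (the converse), the existence of a consistent estimator under the BM model is \emph{equivalent} to $\one\transpose\V_n^{-1}\one\to+\infty$. Writing $\neff(\tree):=\one\transpose\V(\tree)^{-1}\one$ for the ``effective sample size'' of a phylogenetic tree $\tree$ (which depends on $\tree$ alone, through the shared-ancestry matrix), it remains to prove the purely combinatorial equivalence: the big bang condition holds if and only if $\neff(\tree_n)\to+\infty$.

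For ``big bang $\Rightarrow\neff(\tree_n)\to+\infty$'' I would use the variational bound $\neff(\tree_n)\ge (a\transpose\V_n a)^{-1}$ valid for every vector $a$ with $a\transpose\one=1$, which is just Cauchy--Schwarz: $1=(a\transpose\one)^2\le (a\transpose\V_n a)(\one\transpose\V_n^{-1}\one)$. Fix $\ep>0$, put $s=\ep/2$ and $k=\lceil 2H/\ep\rceil$. By the big bang condition $|\partial\tree_n(s)|\ge k$ for all large $n$; the map sending each leaf of $\tree_n$ to the element of $\partial\tree_n(s)$ lying above it partitions $\partial\tree_n$ into $|\partial\tree_n(s)|$ nonempty blocks, so I may pick leaves $\ell_1,\dots,\ell_k$ from $k$ distinct blocks. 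Two leaves in distinct blocks cannot have their most recent common ancestor at depth $>s$ (else they would cross level $s$ at the same point and lie in the same block), so $t_{\ell_i\ell_j}\le s$ for $i\neq j$, while $t_{\ell_i\ell_i}\le H$. Taking $a$ equal to $1/k$ on these $k$ coordinates and $0$ elsewhere gives $a\transpose\V_n a\le H/k+s\le\ep$, hence $\neff(\tree_n)\ge 1/\ep$ for all large $n$; since $\ep$ is arbitrary, $\neff(\tree_n)\to+\infty$.

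For the converse I would first record a deterministic lemma: for any phylogenetic tree $\tree$, $\neff(\tree)\le\sum_{\ell}1/\mathrm{depth}(\ell)$, summing over leaves, with $\mathrm{depth}(\ell)$ the root-to-$\ell$ distance. This is proved by induction on the number of edges: if the root has $\ge 2$ children then $\V(\tree)$ is block diagonal (leaves in different root-subtrees have covariance $0$) and both sides split over the blocks; if the root has a single child via an edge of length $e$, then $\V(\tree)=e\one\one\transpose+\V(\tree')$ and Sherman--Morrison gives $\neff(\tree)=\neff(\tree')/(1+e\,\neff(\tree'))$, after which the inductive hypothesis for $\tree'$ (whose leaf depths are those of $\tree$ minus $e$) reduces the claim to the elementary inequality $Q-Q'\le e\,QQ'$ with $Q=\sum 1/(d_\ell-e)$ and $Q'=\sum 1/d_\ell$, which follows by retaining only the diagonal terms of $QQ'$. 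Now suppose big bang fails: there are $s_0>0$, $M<\infty$, and $n_k\to\infty$ with $|\partial\tree_n(s_0)|\le M$ for $n=n_k$. Because the trees are nested, the set of leaves of $\tree_n$ at depth $\le s_0$ is nondecreasing in $n$, hence (being bounded along the subsequence) is a fixed finite set whose leaves have depths bounded below by some $\delta_0>0$; set $\delta_\ast=\min(s_0,\delta_0)$. Let $S$ be the vector of Brownian values at the leaves of the truncated tree $\tree_n(s_0)$, i.e.\ the leaf vector of the BM on $\tree_n(s_0)$, so $S\sim\mc N(\mu\one,\sigma^2\V(\tree_n(s_0)))$. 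Conditional on $S$, the law of $\mb Y_n$ does not depend on $\mu$ (it is $S$ with Brownian increments below level $s_0$ appended, and those increments are independent of the root state), so the data-processing inequality together with the Gaussian KL identity from the proof of Theorem \ref{thm:BMnecessary} gives $\neff(\tree_n)\le\neff(\tree_n(s_0))$. Since $\tree_n(s_0)$ has at most $M$ leaves, all at depth $\ge\delta_\ast$, the lemma yields $\neff(\tree_n)\le M/\delta_\ast$ for $n=n_k$, so $\neff(\tree_n)\not\to+\infty$.

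I expect the converse to be the main obstacle. Two points need care: one must notice that \emph{nestedness} is precisely what keeps the shallow leaves from escaping toward depth $0$ (without it the equivalence genuinely fails, mirroring the unbounded-height counterexample), and one must set up the reduction from $\tree_n$ to the truncated tree $\tree_n(s_0)$ cleanly --- either via the data-processing/KL route above, or equivalently by a ``collapsing'' argument built on the pruning recursion $p_v=\sum_{c}p_c/(1+p_c\ell_{vc})$ for $\neff$. In either formulation the only real computation is the deterministic upper-bound lemma, so I would isolate it as a standalone lemma.
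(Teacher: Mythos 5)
Your proposal is correct, and while it follows the same overall architecture as the paper (funnel everything through $\neff_n:=\one\transpose\V_n^{-1}\one$ and the KL/total-variation necessity result of Theorem \ref{thm:BMnecessary}), the execution of both directions differs. For sufficiency, the paper constructs an explicit estimator---the plain average of the leaves of a subtree of $\tree_{k_m}$ whose internal nodes lie within depth $1/m$---and applies Chebyshev; your Cauchy--Schwarz bound $\neff_n\ge (a\transpose\V_n a)^{-1}$ with the uniform weight vector on leaves chosen from distinct blocks of $\partial\tree_n(s)$ is exactly that variance computation in deterministic form, after which you invoke Lemma \ref{lem:Ane} (consistency of the MLE) instead of Chebyshev; so this direction is essentially the paper's argument repackaged. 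The genuine divergence is in the converse. The paper truncates at the smallest node depth $\epsilon$ of $\tree_N(s)$, observes that $\tree_n(\epsilon)$ is then a fixed star tree, writes $\V_n$ in block form with blocks $\V_{\tree_i}+\epsilon\one\one\transpose$, and gets $\neff_n\le \ell/\epsilon$ from one application of the Woodbury identity per block. You instead prove a standalone deterministic lemma $\neff(\tree)\le\sum_{\ell}1/d_\ell$ by induction (Sherman--Morrison on a root edge plus block additivity at a branching root---the same rank-one identity the paper uses, iterated), and pass from $\tree_n$ to the truncated tree $\tree_n(s_0)$ via a data-processing argument on the Gaussian KL, obtaining $\neff_n\le\neff(\tree_n(s_0))\le M/\delta_\ast$; both proofs exploit nestedness in the same way, to freeze the shallow part of the tree and secure a positive lower bound on leaf depths ($\epsilon$ versus your $\delta_\ast$). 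Your route is longer but yields two reusable facts---the bound $\neff(\tree)\le\sum_\ell 1/d_\ell$ and the truncation monotonicity $\neff(\tree_n)\le\neff(\tree_n(s))$---and, like the paper's Woodbury computation, it nowhere uses the height bound $H$ in this direction, so it recovers the paper's remark that $\neff_n\to+\infty$ implies the big bang condition even without bounded heights; the paper's single-level block argument is shorter and keeps Theorem \ref{thm:BMbigbang} self-contained (no appeal to Lemma \ref{lem:Ane} in the sufficiency step).
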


\begin{proof}
First, we will prove that if the big bang condition does not hold, then there exists a constant $C > 0$ such that $\one\transpose\V_n^{-1}\one \leq C$.
When the big bang condition does not hold, there exists $s >0$, $K >0 $ and $N > 0$ such that $| \partial \tree_n(s) | = K$ for all $n \geq N$.
Let $\epsilon$ be the smallest distance from the root to the internal nodes and leaves of $\tree_N(s)$.
We note that, by this construction, $\tree_n(\epsilon)$ is a fixed ultrametric star tree with height equals to $\epsilon$ for all $n \geq N$.
Let $I_1, I_2, \ldots, I_\ell$ be the leaves of $\tree_n(\epsilon)$ and $\tree_1, \tree_2, \ldots, \tree_\ell$ be the subtree of $\tree$ stemming from $I_1, I_2, \ldots, I_\ell$.
Then,
\[
\V_n = 
\begin{pmatrix}
\V_{\tree_1} + \epsilon \one \one\transpose & 0 & \cdots & 0\\
0 & \V_{\tree_2} + \epsilon \one \one\transpose & \cdots& 0 \\
\vdots & \vdots &\ddots & \vdots \\
0 & 0 & \cdots & \V_{\tree_\ell} + \epsilon \one \one\transpose
\end{pmatrix}
\]
By the Woodbury matrix identity, we have
\begin{align*}
\one\transpose\V_n^{-1}\one &= \sum_{i=1}^\ell {\one\transpose (\V_{\tree_i} + \epsilon \one \one\transpose)^{-1} \one} \\
& = \sum_{i=1}^\ell{ \frac{1}{(\one\transpose\V_{\tree_i}^{-1}\one)^{-1} + \epsilon}} \leq \frac{\ell}{\epsilon}.
\end{align*}

Next, we will prove that if the big bang condition holds, then there exists a consistent estimator for the ancestral state.
By the big bang condition, for any positive integer $m$, there exists $k_m > k_{m-1}$ such that $| \partial\tree_{k_m}(1/m) | \geq m$ with a convention that $k_0 = 0$. 
Thus, there exists a subtree of $m$ leaves of $\tree_{k_m}$ such that distances from the root to all internal nodes are less than $1/m$. Let $Y_{1,m}, \ldots, Y_{m,m}$ be the leaves of this subtree.
We define our estimator as follows:
\[
\hat \mu_n =  \frac{Y_{1,m} + Y_{2, m} + \cdots + Y_{m,m}}{m}, \quad k_m \leq n <  k_{m+1}.
\]
Note that $E(\hat \mu_n) = \mu$ and $\cov(Y_{i,m}, Y_{j,m}) = \sigma^2 t_{ij,m} \leq \sigma^2/m$ where $t_{ij,m}$ is the distance from the root to the most recent common ancestor of the leaves $Y_{i,m}$ and $Y_{j,m}$.
Therefore,
\begin{align*}
\var(\hat \mu_n) &= \frac{1}{m^2} \left (\sum_{i=1}^m{\var(Y_{i,m})} + 2\sum_{1 \leq i < j \leq m}{\cov(Y_{i,m}, Y_{j,m})} \right ) \\
&\leq \frac{1}{m^2} \left ( mH \sigma^2 + m(m-1)\frac{\sigma^2}{m} \right ) \leq \frac{H+1}{m}\sigma^2 \to 0.
\end{align*}
By Chebyshev's inequality, for all $\epsilon > 0$, we have
\[
P(|\hat \mu_n - \mu| \geq \epsilon) \leq \frac{\var(\hat \mu_n)}{\epsilon^2} \to 0.
\]
Hence, $\hat \mu_n$ is a consistent estimator.

\end{proof}

\begin{remark}
We note that the first part of the proof of Theorem \ref{thm:BMbigbang} also shows that the condition $\one\transpose\V_n^{-1}\one \to + \infty$ implies the big bang condition even without the assumption of bounded tree heights.
\end{remark}

\subsection{Necessary and sufficient condition for consistency of ancestral state reconstruction for threshold models}

\begin{thm}
Assume that $\one\transpose \V_n^{-1} \one$ are bounded. Then, there is no consistent estimator for the ancestral state under the threshold model.
\label{thm:thresholdneccesary}
\end{thm}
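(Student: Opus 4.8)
The plan is to mimic the structure of the proof of Theorem~\ref{thm:BMnecessary}: show that when $\one\transpose \V_n^{-1}\one$ is bounded, two mean parameters $\mu_1 \ne \mu_2$ with the \emph{same sign} produce observation laws $Q_{\mu_1},Q_{\mu_2}$ (the pushforwards of the BM laws $\mc N(\mu_i,\sigma^2\V_n)$ under the coordinatewise sign map) whose total variation distance stays bounded away from $1$; then a consistent estimator of $\rho=\sign(\mu)$ would force that distance to $1$, a contradiction. The key difference is that the observations are now discrete ($\pm1$–valued), so I cannot use the closed-form Gaussian KL directly on $Q_{\mu}$; instead I would exploit the data-processing inequality: since $\mb Y_n$ is a deterministic function of $\mb Z_n$, we have $d_{\TV}(Q_{\mu_1},Q_{\mu_2}) \le d_{\TV}(P_{\mu_1,\sigma^2},P_{\mu_2,\sigma^2})$. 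By exactly the computation in Theorem~\ref{thm:BMnecessary} — $\KL(P_{\mu_1,\sigma^2},P_{\mu_2,\sigma^2}) = \tfrac{1}{2\sigma^2}\one\transpose\V_n^{-1}\one\,(\mu_1-\mu_2)^2 \le \tfrac{C}{2\sigma^2}(\mu_1-\mu_2)^2$ — followed by Pinsker's (or Vajda's) inequality, the right-hand side is bounded by some $d_0<1$ uniformly in $n$, hence so is $d_{\TV}(Q_{\mu_1},Q_{\mu_2})$.

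Next I would set up the contradiction. Fix $\mu_1>0$ and $\mu_2<0$ (so $\rho$ differs: $\sign(\mu_1)=1\ne -1=\sign(\mu_2)$) with $(\mu_1-\mu_2)^2$ as small as we like — but note it need not be small, since $d_0<1$ holds for \emph{every} fixed pair; the bound $d_0$ just depends on $C,\sigma^2,\mu_1,\mu_2$. Suppose $\hat\rho_n=\hat\rho_n(\mb Y_n)$ is consistent for $\rho$. Let $\mc A_n=\{\hat\rho_n=1\}$, a measurable event in the observation space. Under the true parameter $\mu_1$, $\P{\mc A_n}_{Q_{\mu_1}}\to 1$, and under $\mu_2$, $\P{\mc A_n}_{Q_{\mu_2}}\to 0$. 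Therefore $d_{\TV}(Q_{\mu_1},Q_{\mu_2}) \ge |Q_{\mu_1}(\mc A_n)-Q_{\mu_2}(\mc A_n)| \to 1$, contradicting the uniform bound $d_{\TV}\le d_0<1$. Hence no consistent estimator of the ancestral state exists under the threshold model.

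One point I would be careful about: the definition of "consistent estimator" for $\rho$. Since $\rho$ takes values in $\{-1,1\}$, consistency means $\P{\hat\rho_n=\rho}\to 1$ under the model — equivalently $\P{\hat\rho_n\to\rho}\to 1$ in the discrete topology — and that is exactly what makes $Q_{\mu_1}(\mc A_n)\to 1$, $Q_{\mu_2}(\mc A_n)\to 0$ hold. I should also note that $\mu=0$ is a boundary case where $\rho=\sign(0)$ is ill-defined; restricting attention to $\mu_1,\mu_2$ both nonzero (one positive, one negative) avoids this and suffices for the impossibility argument. The main obstacle — and it is a mild one — is simply making the data-processing step rigorous, i.e., confirming that $d_{\TV}$ can only decrease under the (deterministic, hence Markov) thresholding map; everything else is a verbatim reuse of the machinery already established for the BM case.
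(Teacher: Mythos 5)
Your proposal is correct and takes essentially the same approach as the paper: a data-processing step from the Gaussian law of $\mb Z_n$ to the law of $\mb Y_n$ (the paper applies it to the KL divergence and then uses Vajda's inequality, you apply it to total variation), the closed-form Gaussian KL bounded via $\one\transpose\V_n^{-1}\one \le C$, and the contradiction through $\mc A_n=\{\hat\rho_n=1\}$ with $\mu_1>0>\mu_2$. The only caveat is that Pinsker alone gives $d_{\TV}\le d_0<1$ only if $(\mu_1-\mu_2)$ is taken small, whereas Vajda's inequality (as in the paper) works for any fixed pair — but since you note both options, this is not a gap.
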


\begin{proof}
Let $P_{\mu, \sigma^2}$ and $Q_{\mu, \sigma^2}$ be the joint distribution of $\mb{Z}$ and $\mb{Y}$ respectively.
We have 
\[
\KL(Q_{\mu_1,\sigma^2}, Q_{\mu_2,\sigma^2}) \leq \KL(P_{\mu_1,\sigma^2}, P_{\mu_2,\sigma^2}) = \frac{1}{2 \sigma^2} \mb{1}^\top \mb{V}_n^{-1} \mb{1} (\mu_1 - \mu_2)^2.
\]
Applying Vajda's inequality \citep{vajda1970note}, we have
\[
\KL(Q_{\mu_1,\sigma^2}, Q_{\mu_2,\sigma^2}) \geq \log \left ( \frac{1 + d_{\TV}(Q_{\mu_1,\sigma^2}, Q_{\mu_2,\sigma^2})}{1 - d_{\TV}(Q_{\mu_1,\sigma^2}, Q_{\mu_2,\sigma^2})} \right ) - \frac{2 d_{\TV}(Q_{\mu_1,\sigma^2}, Q_{\mu_2,\sigma^2})}{d_{\TV}(Q_{\mu_1,\sigma^2}, Q_{\mu_2,\sigma^2}) + 1}.
\]
Hence,
\[
d_{\text{TV}}(Q_{\mu_1,\sigma^2}, Q_{\mu_2,\sigma^2}) \leq c < 1.
\]
Assume that there exists a consistent estimator $\hat \rho_n$ for the ancestral state $\rho = \sign(\mu)$. 
Define $\mc{A} = \{ \hat \rho_n = 1 \}$, we have
\[
Q_{1,1}(\mc{A}) \to 1 \quad \text{and} \quad Q_{-1,1}(\mc{A}) \to 0,
\]
which implies
\[
d_{\text{TV}}(Q_{1,1}, Q_{-1,1}) = \sup_{A}(|Q_{1,1}(A) - Q_{-1,1}(A)|) \geq |Q_{1,1}(\mc{A}) - Q_{-1,1}(\mc{A})| \to 1.
\]
This is a contradiction.
Therefore, there is no consistent estimator for $\rho = \sign(\mu)$. 
\end{proof}

\begin{lemma}[\citet{lancaster1957some}]
Let $(X,Y)$ be a bivariate normal distribution and two functions $f, g$ such that $E(f(X)^2) < + \infty$ and $E(g(Y)^2) < + \infty$.
Then
\[
\frac{|\cov(f(X), g(Y))|}{\sqrt{\var(f(X)) \var(g(Y))}} \leq \frac{|\cov(X,Y)|}{\sqrt{\var(X) \var(Y)}}.
\]
\label{lem:boundcorr}
\end{lemma}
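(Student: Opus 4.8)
The plan is to reduce to the standardized case and then use the Hermite polynomial expansion of $L^2$ functions of a Gaussian, together with Mehler's identity for the bivariate Gaussian. First I would normalize. Both sides of the claimed inequality are unchanged when $X$ is replaced by $(X - EX)/\sqrt{\var X}$ and $f$ by the function $x \mapsto f(\sqrt{\var X}\,x + EX)$, and similarly for $Y$ and $g$, so we may assume $X$ and $Y$ are standard normal with correlation $\rho := \cov(X,Y)$, where $|\rho| \le 1$. Subtracting means and dividing by standard deviations further reduces the problem to showing: whenever $E f(X) = E g(Y) = 0$ and $\var f(X) = \var g(Y) = 1$, one has $|E[f(X)g(Y)]| \le |\rho|$.

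Next I would expand $f$ and $g$ in the orthonormal Hermite basis $\{h_n\}_{n \ge 0}$ of $L^2(\gamma)$, where $\gamma$ is the standard Gaussian measure, writing $f = \sum_{n \ge 0} a_n h_n$ and $g = \sum_{n \ge 0} b_n h_n$ with convergence in $L^2(\gamma)$. The normalizations force $a_0 = E f(X) = 0$, $b_0 = 0$, and $\sum_{n \ge 1} a_n^2 = \sum_{n \ge 1} b_n^2 = 1$. The crucial fact is Mehler's formula: for a bivariate standard normal $(X,Y)$ with correlation $\rho$, $E[h_n(X) h_m(Y)] = \rho^n \delta_{nm}$. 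Granting this, together with the termwise evaluation of $E[f(X)g(Y)]$, we get $E[f(X)g(Y)] = \sum_{n \ge 1} a_n b_n \rho^n$, and hence
\[
|E[f(X)g(Y)]| \le \sum_{n \ge 1} |a_n|\,|b_n|\,|\rho|^n \le |\rho| \sum_{n \ge 1} |a_n|\,|b_n| \le |\rho| \Big(\sum_{n \ge 1} a_n^2\Big)^{1/2}\Big(\sum_{n \ge 1} b_n^2\Big)^{1/2} = |\rho|,
\]
using $|\rho|^n \le |\rho|$ for $n \ge 1$ and the Cauchy--Schwarz inequality. This is exactly the assertion.

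The main obstacle is justifying the two analytic ingredients: Mehler's identity, and the passage from finite partial sums to the full series for $E[f(X)g(Y)]$. For Mehler's identity I would use the Hermite generating function $\sum_n h_n(x)\, t^n/\sqrt{n!} = \exp(tx - t^2/2)$ and compute the double generating function from the bivariate Gaussian moment generating function: $E\big[\exp(sX - s^2/2)\exp(tY - t^2/2)\big] = \exp(\rho s t)$, then match coefficients of $s^n t^m$. For the limit exchange, note that the bilinear form $(f,g) \mapsto E[f(X)g(Y)]$ satisfies $|E[f(X)g(Y)]| \le \sqrt{E[f(X)^2]}\,\sqrt{E[g(Y)^2]} = \|f\|_{L^2(\gamma)}\,\|g\|_{L^2(\gamma)}$ by Cauchy--Schwarz under the genuine probability measure given by the joint law of $(X,Y)$; this joint continuity on $L^2(\gamma) \times L^2(\gamma)$ lets the identity $E[f(X)g(Y)] = \sum_n a_n b_n \rho^n$, which is elementary for the finite truncations of the Hermite expansions, pass to the limit. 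Everything else is routine.
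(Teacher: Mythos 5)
Your proof is correct. The paper does not prove this lemma at all---it imports it directly from the cited reference \citep{lancaster1957some}---and the argument you give (normalize to standard bivariate normal, expand $f$ and $g$ in the orthonormal Hermite basis, apply Mehler's identity $E[h_n(X)h_m(Y)]=\rho^n\delta_{nm}$, then bound $\sum_{n\ge 1}a_nb_n\rho^n$ by $|\rho|$ via Cauchy--Schwarz) is exactly the classical route, essentially Lancaster's own. The only points worth a passing remark are the trivial degenerate cases (e.g.\ $\var(f(X))=0$, where the claim is vacuous or immediate) and the routine justification of matching coefficients in the double generating function, both of which you have either implicitly covered or correctly flagged as standard.
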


\begin{thm}
Assume that big bang condition is satisfied. Then, there is a consistent ancestral state reconstruction method under the threshold model.
\label{thm:thresholdsufficient}
\end{thm}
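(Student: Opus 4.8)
The plan is to reuse the selection of leaves from the sufficiency part of the proof of Theorem~\ref{thm:BMbigbang}. By the big bang condition, for each positive integer $m$ there is $k_m > k_{m-1}$ together with $m$ leaves $Y_{1,m},\dots,Y_{m,m}$ of $\tree_{k_m}$ all of whose internal nodes lie at depth less than $1/m$; thus the shared‑ancestry depths satisfy $t_{ij,m} < 1/m$ whenever $i\neq j$, while $0 < t_{ii,m} \le H$. I would then estimate $\rho = \sign(\mu)$ (taking $\mu\neq 0$ so that $\rho$ is well defined, and breaking ties in $\sign$ arbitrarily) by
\[
\hat\rho_n = \sign(\bar Y_m), \qquad \bar Y_m = \frac1m\sum_{i=1}^m Y_{i,m}, \qquad k_m \le n < k_{m+1},
\]
and argue that $\hat\rho_n$ is consistent.

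Two estimates do the job. First, the mean of $\bar Y_m$ is bounded away from $0$ with sign $\rho$: since $Z_{i,m}\sim \mc N(\mu,\sigma^2 t_{ii,m})$, a direct computation gives $E(Y_{i,m}) = 2\Phi\!\bigl(\mu/(\sigma\sqrt{t_{ii,m}})\bigr) - 1$ (with $\Phi$ the standard normal distribution function), which has the same sign as $\mu$ and, because $t_{ii,m}\le H$, satisfies $|E(Y_{i,m})| \ge 2\Phi\!\bigl(|\mu|/(\sigma\sqrt H)\bigr) - 1 =: \delta_0 > 0$; hence $\sign\bigl(E(\bar Y_m)\bigr) = \rho$ and $|E(\bar Y_m)| \ge \delta_0$. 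Second, $\var(\bar Y_m)\to 0$: the diagonal terms satisfy $\var(Y_{i,m})\le 1$, and for $i\neq j$ the pair $(Z_{i,m},Z_{j,m})$ is bivariate normal with correlation $t_{ij,m}/\sqrt{t_{ii,m}t_{jj,m}}$, so Lemma~\ref{lem:boundcorr} yields
\[
\bigl|\cov(Y_{i,m},Y_{j,m})\bigr| \le \frac{t_{ij,m}}{\sqrt{t_{ii,m}t_{jj,m}}}\sqrt{\var(Y_{i,m})\var(Y_{j,m})} = t_{ij,m}\cdot\frac{\sqrt{\var(Y_{i,m})}}{\sqrt{t_{ii,m}}}\cdot\frac{\sqrt{\var(Y_{j,m})}}{\sqrt{t_{jj,m}}}.
\]
Writing $v = |\mu|/(\sigma\sqrt{t_{ii,m}})$ one has $\var(Y_{i,m}) = 4\Phi(v)\bigl(1-\Phi(v)\bigr) \le 4\bigl(1-\Phi(v)\bigr)$ and $1/t_{ii,m} = \sigma^2 v^2/\mu^2$, while the Gaussian tail bound $1-\Phi(v)\le \phi(v)/v$ gives $\sup_{v>0} v^2\bigl(1-\Phi(v)\bigr) < \infty$; hence $\var(Y_{i,m})/t_{ii,m}$ is bounded by a constant depending only on $\mu$ and $\sigma$, so $\bigl|\cov(Y_{i,m},Y_{j,m})\bigr| \le C\,t_{ij,m} \le C/m$. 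Consequently
\[
\var(\bar Y_m) = \frac{1}{m^2}\Bigl(\sum_i \var(Y_{i,m}) + \sum_{i\neq j}\cov(Y_{i,m},Y_{j,m})\Bigr) \le \frac{1}{m} + \frac{C}{m} \longrightarrow 0,
\]
and Chebyshev's inequality finishes the argument: $P(\hat\rho_n\neq\rho) \le P\bigl(|\bar Y_m - E(\bar Y_m)| \ge \delta_0\bigr) \le \var(\bar Y_m)/\delta_0^2 \to 0$.

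The main obstacle is the covariance bound. Because the leaves furnished by the big bang condition can be arbitrarily shallow, the naive Lancaster estimate $t_{ij,m}/\sqrt{t_{ii,m}t_{jj,m}}$ need not go to zero. The step requiring care is the observation that a shallow leaf $i$ forces $Y_{i,m}$ to be nearly deterministic, so that $\var(Y_{i,m})$ shrinks at precisely the rate needed to absorb $1/\sqrt{t_{ii,m}}$ (quantitatively, $\sup_{v>0} v^2(1-\Phi(v)) < \infty$); once this is in hand, only the guaranteed smallness of the shared‑ancestry depth $t_{ij,m} < 1/m$ is needed to push the covariance to zero, exactly paralleling the Brownian‑motion argument in Theorem~\ref{thm:BMbigbang}.
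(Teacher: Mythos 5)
Your proof is correct, and its skeleton coincides with the paper's: select, via the big bang condition, $m$ leaves of $\tree_{k_m}$ whose pairwise shared-ancestry depths are below $1/m$, estimate $\rho$ by $\sign(\bar Y_m)$, lower-bound $E(\bar Y_m)$ by $2\Phi\left(|\mu|/(\sigma\sqrt H)\right)-1$ using the bounded height, control covariances via Lemma~\ref{lem:boundcorr}, and finish with Chebyshev. Where you genuinely diverge is in handling the shallow-leaf obstacle, i.e.\ the fact that $t_{ij,m}/\sqrt{t_{i,m}t_{j,m}}$ need not be small when some $t_{i,m}$ are tiny. The paper resolves this with a dichotomy: if some subsequence of leaf depths tends to $0$, a single near-root leaf is already a (trivial) consistent estimator; otherwise the depths are bounded below by some $\alpha>0$ and the crude bound $|\cov(Y_{i,m},Y_{j,m})|\le t_{ij,m}/\sqrt{t_{i,m}t_{j,m}}\le 1/(m\alpha)$ suffices. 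You instead keep the variance factors in Lancaster's inequality and observe that $\var(Y_{i,m})/t_{i,m}\le (4\sigma^2/\mu^2)\sup_{v>0}v^2\left(1-\Phi(v)\right)<\infty$, so shallow leaves are automatically nearly deterministic and $|\cov(Y_{i,m},Y_{j,m})|\le C\,t_{ij,m}\le C/m$ without any case split. Both arguments are valid; the paper's is more elementary (no Gaussian tail estimate, but a two-case analysis and a separately defined estimator in the shallow case), while yours yields a single uniform estimator and a one-pass variance bound, at the price of a constant depending on $(\mu,\sigma)$ — which is harmless, since consistency is a pointwise statement in the parameter.
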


\begin{proof}
Let $k_m$ be the increasing sequence constructed in the proof of Theorem \ref{thm:BMbigbang}.
There exist a subtree of $m$ leaves of $\tree_{k_m}$ such that distances from the root to all internal nodes are less than $1/m$. Let $Y_{1,m}, \ldots, Y_{m,m}$ be the leaves of this subtree.
Let $t_{i,m}$ be the distance from the root to the leaf $Y_{i,m}$, and $t_{ij,m}$ be the distance from the root to the most recent common ancestor of the leaves $Y_{i,m}$ and $Y_{j,m}$.

If there is a sequence of leaves whose distance to the root converges to $0$, then their trait values form a trivial consistent estimator of the ancestral state.
Formally, denote $\tau_m = \min_i t_{i,m}$ and $s_m = \arg\min_i t_{i,m}$.
If there exists a subsequence $\tau_{m_u} \to 0$, then $Y_{s_{m_u}, m_u}$ is a trivial consistent estimator for the ancestral state $\rho = \sign(\mu)$.

On the other hand, if there exists $\alpha > 0$ such that $\tau_m \geq \alpha$ for all $m$, we will prove that
\[
\hat \rho_n =  \sign (\overline Y_m) = \sign \left (\frac{Y_{1,m} + Y_{2,m} + \cdots + Y_{m,m}}{m} \right ), \quad k_m \leq n <  k_{m+1}
\]
is a consistent estimator for the ancestral state.
Without the loss of generality, we assume that $\rho = \sign(\mu) = 1$.
We have
\begin{align*}
E(\overline Y_m) &= \frac{1}{m} \sum_{i=1}^m{\left [P(Z_{i,m} > 0) - P(Z_{i,m} < 0) \right ]} = \left (\frac{2}{m} \sum_{i=1}^m{P(Z_{i,m} > 0)} \right ) - 1 \\
& = \left (\frac{2}{m} \sum_{i=1}^m{\Phi \left(\frac{\mu}{\sigma\sqrt{t_{i,m}}}\right )} \right ) - 1 \geq 2 \Phi \left(\frac{\mu}{\sigma\sqrt{H}}\right ) -1 > 0
\end{align*}
where $\Phi$ is the cumulative distribution function of the standard Normal distribution.
Hence,
\begin{align*}
P(\hat \rho_n = -1) &= P( \overline Y_m < 0) = P[\overline Y_m - E(\overline Y_m) < - E(\overline Y_m)] \\
&\leq P[| \overline Y_m - E(\overline Y_m) | \geq E(\overline Y_m)] \leq \frac{\var(\overline Y_m)}{E(\overline Y_m)^2} \\
& \leq \frac{\var(\overline Y_m)}{\left [2 \Phi \left(\frac{\mu}{\sigma\sqrt{H}}\right ) -1 \right ]^2}.
\end{align*}
Note that $\var(Y_{i,m}) \leq 1$ since $|Y_{i,m}| = 1$.
By Lemma \ref{lem:boundcorr}, we have
\[
|\cov(Y_{i,m}, Y_{j,m})| \leq \frac{t_{ij,m}}{\sqrt{t_{i,m} t_{j,m}}} \sqrt{\var(Y_{i,m}) \var(Y_{j,m})} \leq \frac{t_{ij,m}}{\sqrt{t_{i,m} t_{j,m}}} \leq \frac{1}{m \alpha}.
\]
Therefore,
\begin{align*}
\var(\overline Y_m) &= \left (\sum_{i=1}^m{\var(Y_{i,m})} + 2\sum_{1 \leq i < j \leq m}{\cov(Y_{i,m}, Y_{j,m})} \right ) \\
&\leq \frac{1}{m^2} \left (m + m(m-1)\frac{1}{m \alpha} \right ) \leq \frac{1+\alpha^{-1}}{m} \to 0.
\end{align*}
We conclude that 
\[
P(\hat \rho_n = 1) \leq \frac{\var(\overline Y_m)}{\left [2 \Phi \left(\frac{\mu}{\sigma\sqrt{H}}\right ) -1 \right ]^2} \to 1.
\] 
Thus, $\hat \rho_n $ is a consistent estimator.

\end{proof}

\begin{remark}
We complete the proof of Theorem \ref{thm:main} by combining Lemmas \ref{lem:Ane} and \ref{lem:Roch} with Theorems \ref{thm:BMnecessary}, \ref{thm:BMbigbang}, \ref{thm:thresholdneccesary}, and  \ref{thm:thresholdsufficient}.
\end{remark}

\section{Unbounded heights}

When the tree heights are unbounded, the equivalence between the condition $\one\transpose\V_n^{-1}\one \to + \infty$ and the big bang condition is no longer valid.
To see this, let us consider a simple scenario where $(\tree_n)_{n=1}^\infty$ is a sequence of nested star tree (see Figure \ref{fig:star}).
Let $H_n$ be the distance from the root to the $n$-th species.
It is trivial that the sequence of trees $(\tree_n)_{n=1}^\infty$ satisfies the big bang condition.
On the other hand, 
\[
\one\transpose\V_n^{-1}\one = \sigma^2 \sum_{n=1}^\infty{\frac{1}{H_n}}.
\]
Hence, $\one\transpose\V_n^{-1}\one \to + \infty$ if and only if 
\[
\sum_{n=1}^\infty{\frac{1}{H_n}} \to + \infty.
\]
Therefore, the condition $\one\transpose\V_n^{-1}\one \to + \infty$ and the big bang condition are not equivalent when tree heights are unbounded.

\begin{figure}[h]
\centering
\includegraphics[width = 0.6\textwidth]{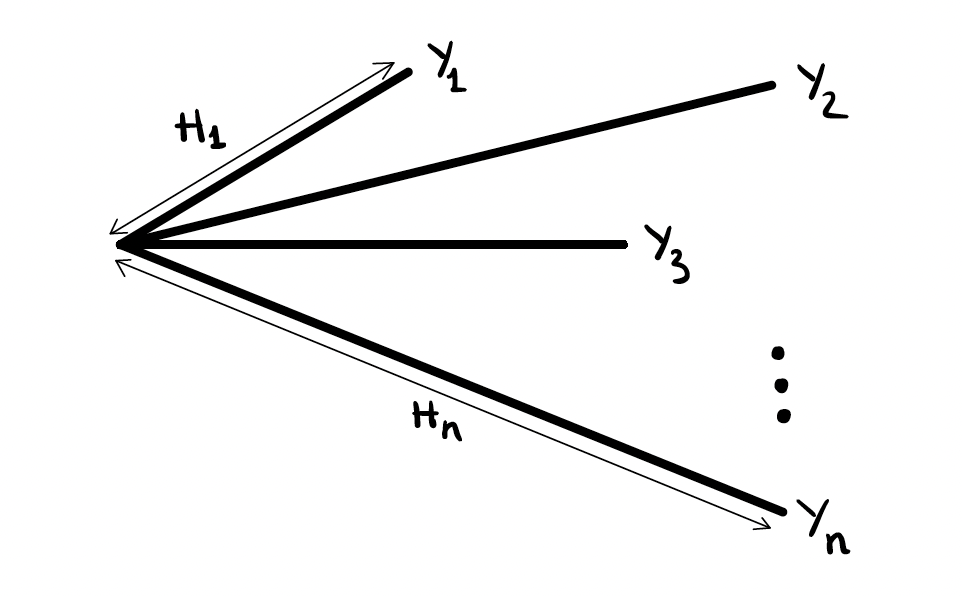}
\caption{An $n$-species star tree.}
\label{fig:star}
\end{figure}

We note that Theorem \ref{thm:BMnecessary} and Theorem \ref{thm:thresholdneccesary} do not require the heights of trees are bounded. 
Therefore, even without the bounded heights condition, $\one\transpose\V_n^{-1}\one \to + \infty$ is the necessary condition for the existence of a consistent ancestral state reconstruction method under the BM model and threshold model.
On the other hand, the big bang condition is the necessary condition for the existence of a consistent estimator under regular discrete models when tree heights are not bounded \citep[see the proof of Proposition 3.1 in][]{fan2018necessary}.
A natural question is when tree heights are unbounded, whether either the big bang condition or the condition $\one\transpose\V_n^{-1}\one \to + \infty$ is a sufficient condition for these models.
\citet{ane2008analysis} gives a positive answer for the BM model by showing that the MLE is consistent if $\one\transpose\V_n^{-1}\one \to + \infty$.
Unfortunately, without additional conditions, neither condition is enough to guarantee that there is a consistent method for reconstructing the ancestral state under regular discrete models.
Specifically, we provide a simple counter-example using a sequence of nested star trees and the two-state symmetric model.

\begin{thm}
Consider a sequence of nested star trees $(\tree_n)_{n=1}^\infty$.
Let $H_n$ be the height of $\tree_n$ such that $H_n / n \to 0$ and $H_n / \log n \to \infty$.
Then, 
\[
\sum_{n=1}^\infty{\frac{1}{H_n}} \to + \infty,
\]
but there is no consistent ancestral state reconstruction method under the two-state symmetric model.
\label{thm:counter-ex}
\end{thm}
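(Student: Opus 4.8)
The divergence $\sum_{n=1}^\infty 1/H_n = +\infty$ is the easy half: from $H_n/n\to 0$ one has $H_n<n$ for all large $n$, hence $1/H_n > 1/n$ eventually, and comparison with the harmonic series gives the claim. The substance is the nonexistence of a consistent estimator of the root state $\rho=\sign(\mu)$, and the plan is to recycle the mechanism of Theorem~\ref{thm:thresholdneccesary}: it suffices to show that the total variation distance between the two root-conditional laws of $\mb Y_n$ is bounded away from $1$ uniformly in $n$, since a consistent estimator $\hat\rho_n$ (with $\mc A=\{\hat\rho_n=+1\}$) would force this distance to $1$.

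So everything reduces to a total-variation upper bound, and here the star-tree geometry is what makes it work: given the root state, the leaves are independent. Write $Q^{(n)}_{+}$ and $Q^{(n)}_{-}$ for the laws of $\mb Y_n$ conditional on $\rho=+1$ and $\rho=-1$. Under the two-state symmetric model a leaf at distance $h$ from the root agrees with the root with probability $\tfrac12(1+e^{-ch})$, where $c>0$ is fixed by the substitution rate; so with $\beta_i = e^{-cH_i}$ the marginal of $Q^{(n)}_{\pm}$ at leaf $i$ is $\bigl(\tfrac12(1+\beta_i),\tfrac12(1-\beta_i)\bigr)$ and its mirror image. I would pass to the Hellinger affinity $\rho_H$, which multiplies over independent coordinates: the one-leaf value is $\rho_H=\sqrt{1-\beta_i^2}$, hence $\rho_H(Q^{(n)}_{+},Q^{(n)}_{-})=\prod_{i=1}^n\sqrt{1-e^{-2cH_i}}$, and then $d_{\TV}(Q^{(n)}_{+},Q^{(n)}_{-})\le\sqrt{1-\rho_H^2}$.

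The final step is to bound this product below by a positive constant, and this is where $H_n/\log n\to\infty$ enters. Choosing $M$ with $2cM>1$ gives $e^{-2cH_i}\le i^{-2cM}$ for all large $i$, so $\sum_i e^{-2cH_i}<\infty$; therefore $\sum_i\log(1-e^{-2cH_i})$ converges and $\prod_{i\ge1}(1-e^{-2cH_i})$ converges to some $c_0>0$. Since the partial products are decreasing, $\rho_H(Q^{(n)}_{+},Q^{(n)}_{-})\ge\sqrt{c_0}$ for every $n$, whence $d_{\TV}(Q^{(n)}_{+},Q^{(n)}_{-})\le\sqrt{1-c_0}<1$, the desired contradiction. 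To stay closer to the paper's style one can instead bound $\KL(Q^{(n)}_{+},Q^{(n)}_{-})=\sum_i\beta_i\log\frac{1+\beta_i}{1-\beta_i}\le\text{const}\cdot\sum_i e^{-2cH_i}<\infty$ and invoke Vajda's inequality exactly as in Theorem~\ref{thm:thresholdneccesary}.

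I do not anticipate a real obstacle; the only thing that needs care is the interaction of the two rate hypotheses. The point is that $H_n/\log n\to\infty$ is precisely the condition making the \emph{total} amount of information the leaves carry about the root finite (each leaf contributes $\Theta(e^{-2cH_n})$, and these are summable), so the two root hypotheses become indistinguishable; meanwhile $H_n/n\to0$ is still loose enough to keep $\sum 1/H_n$ --- equivalently $\one\transpose\V_n^{-1}\one$ --- divergent, and the big bang condition is automatic for star trees. The counter-example lives in the window $\log n\ll H_n\ll n$.
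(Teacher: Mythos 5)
Your proposal is correct and takes essentially the same approach as the paper: exploit the conditional independence of the leaves on a star tree to show the divergence between the two root-conditional laws stays bounded (the paper bounds $\KL(P_1,P_0)=\sum_k (2p_k-1)\log\frac{p_k}{1-p_k}\le C\sum_k e^{-\eta H_k}<\infty$ using $H_k/\log k\to\infty$), then invoke the Vajda/total-variation contradiction already used in Theorems \ref{thm:BMnecessary} and \ref{thm:thresholdneccesary}. Your primary route through the Hellinger affinity is only a cosmetic variant, and the KL-plus-Vajda alternative you sketch at the end is precisely the paper's argument; your explicit harmonic-series check of $\sum_n 1/H_n=+\infty$ is a small addition the paper leaves implicit.
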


\begin{proof}
Let $P_{\rho}$ be the joint distribution of the observations $\mb{Y}_n = (Y_1, Y_2, \ldots, Y_n)$ under the two-state symmetric model with ancestral state $\rho$.
Denote $p_k = [1 + \exp(- \eta H_k)]/2$ where $\eta$ is the mutation rate of the binary trait.
We have 
\begin{align*}
\KL(P_{1}, P_{0}) &= \sum_{k=1}^n {p_k \log \left (\frac{p_k}{1-p_k} \right ) + (1-p_k) \log \left (\frac{1 - p_k}{p_k} \right )} \\
& = \sum_{k=1}^n {(2 p_k -1 ) \log \left (1 + \frac{2p_k - 1}{1-p_k} \right )} \\
& \leq \sum_{k=1}^n { \frac{(2p_k - 1)^2}{1-p_k}} \leq C \sum_{k=1}^n {(2p_k - 1)^2} \\
&= C \sum_{k=1}^n {\exp(- \eta H_k)} = C \sum_{k=1}^n {\left (\frac{1}{k}\right )^{\eta H_k/\log k}} < + \infty.
\end{align*}
By the same arguments of Theorems \ref{thm:BMnecessary} and \ref{thm:thresholdneccesary}, we deduce that there is no consistent estimator for the ancestral state.
\end{proof}

\section{Conclusion and Discussion}

In this work, we provide a unified theory for ancestral state reconstruction across different models for a sequence of nested trees with bounded heights.
We show that the condition $\one\transpose\V_n^{-1}\one \to + \infty$ arose from the study of the BM model is equivalent to the big bang condition for discrete models. 
Furthermore, these conditions are the necessary and sufficient condition for the existence of a consistent estimator for the ancestral state under the BM, regular discrete, and threshold models.

We provide a simple counter-example to show that when tree heights are unbounded, the condition $\one\transpose\V_n^{-1}\one \to + \infty$ and the big bang condition are no longer equivalent. 
Moreover, neither condition is a sufficient condition for the existence of a consistent ancestral state reconstruction method under regular discrete models.
It is worth noticing that the condition $\one\transpose\V_n^{-1}\one \to + \infty$ is the necessary and sufficient condition for the existence of a consistent estimator for the ancestral state under the BM model without the requirement of bounded tree heights.
Establishing a necessary and sufficient condition for regular discrete models and the threshold model when tree heights are unbounded remains open.

It is worth noticing that the MLE for the ancestral state is consistent under the BM model when the condition $\one\transpose\V_n^{-1}\one \to + \infty$ holds \citep{ane2008analysis}.
Furthermore, by Proposition 6 in \citet{steel2008maximum}, when evolution dynamics of a regular finite-state discrete model is known, the MLE for the ancestral state is consistent if there exists a consistent ancestral state reconstruction method.
However, the consistency of the MLE under the threshold model remains unknown.
In some scenarios, there exists a better ancestral state reconstruction method than the MLE \citep{ho2019multi, ho2021ancestral}.
Therefore, the condition $\one\transpose\V_n^{-1}\one \to + \infty$, which also implies the big bang condition, may not be a sufficient condition for the consistency of the MLE under the threshold model.

\section*{Acknowledgement}
LSTH was supported by startup funds from Dalhousie University, the Canada Research Chairs program, the
NSERC Discovery Grant RGPIN-2018-05447, and the NSERC Discovery Launch Supplement DGECR-2018-00181.
VD was supported by a startup fund from the University of Delaware and National Science Foundation grant DMS-1951474.

\bibliographystyle{chicago}
\bibliography{biblio}

\begin{thebibliography}{}

\bibitem[\protect\citeauthoryear{An{\'e}}{An{\'e}}{2008}]{ane2008analysis}
An{\'e}, C. (2008).
\newblock Analysis of comparative data with hierarchical autocorrelation.
\newblock {\em Annals of Applied Statistics\/}~{\em 2\/}(3), 1078--1102.

\bibitem[\protect\citeauthoryear{An{\'e}, Ho, and Roch}{An{\'e}
  et~al.}{2017}]{ane2017phase}
An{\'e}, C., L.~S.~T. Ho, and S.~Roch (2017).
\newblock Phase transition on the convergence rate of parameter estimation
  under an {Ornstein--Uhlenbeck} diffusion on a tree.
\newblock {\em Journal of Mathematical Biology\/}~{\em 74\/}(1), 355--385.

\bibitem[\protect\citeauthoryear{Fan and Roch}{Fan and
  Roch}{2018}]{fan2018necessary}
Fan, W.-T.~L. and S.~Roch (2018).
\newblock Necessary and sufficient conditions for consistent root
  reconstruction in markov models on trees.
\newblock {\em Electronic Journal of Probability\/}~{\em 23}.

\bibitem[\protect\citeauthoryear{Faria, Rambaut, Suchard, Baele, Bedford, Ward,
  Tatem, Sousa, Arinaminpathy, P{\'e}pin, et~al.}{Faria
  et~al.}{2014}]{faria2014early}
Faria, N.~R., A.~Rambaut, M.~A. Suchard, G.~Baele, T.~Bedford, M.~J. Ward,
  A.~J. Tatem, J.~D. Sousa, N.~Arinaminpathy, J.~P{\'e}pin, et~al. (2014).
\newblock The early spread and epidemic ignition of {HIV-1} in human
  populations.
\newblock {\em Science\/}~{\em 346\/}(6205), 56--61.

\bibitem[\protect\citeauthoryear{Felsenstein}{Felsenstein}{1985}]{felsenstein1985phylogenies}
Felsenstein, J. (1985).
\newblock Phylogenies and the comparative method.
\newblock {\em The American Naturalist\/}~{\em 125\/}(1), 1--15.

\bibitem[\protect\citeauthoryear{Felsenstein}{Felsenstein}{2004}]{felsenstein2004inferring}
Felsenstein, J. (2004).
\newblock {\em Inferring phylogenies}, Volume~2.
\newblock Sinauer associates Sunderland, MA.

\bibitem[\protect\citeauthoryear{Felsenstein}{Felsenstein}{2012}]{felsenstein2012comparative}
Felsenstein, J. (2012).
\newblock A comparative method for both discrete and continuous characters
  using the threshold model.
\newblock {\em The American Naturalist\/}~{\em 179\/}(2), 145--156.

\bibitem[\protect\citeauthoryear{Gill, Ho, Baele, Lemey, and Suchard}{Gill
  et~al.}{2017}]{gill2017relaxed}
Gill, M.~S., L.~S.~T. Ho, G.~Baele, P.~Lemey, and M.~A. Suchard (2017).
\newblock A relaxed directional random walk model for phylogenetic trait
  evolution.
\newblock {\em Systematic Biology\/}~{\em 66\/}(3), 299--319.

\bibitem[\protect\citeauthoryear{Ho and An{\'e}}{Ho and
  An{\'e}}{2013}]{ho2013asymptotic}
Ho, L. S.~T. and C.~An{\'e} (2013).
\newblock Asymptotic theory with hierarchical autocorrelation:
  {Ornstein--Uhlenbeck} tree models.
\newblock {\em The Annals of Statistics\/}~{\em 41\/}(2), 957--981.

\bibitem[\protect\citeauthoryear{Ho and An{\'e}}{Ho and
  An{\'e}}{2014}]{ho2014intrinsic}
Ho, L. S.~T. and C.~An{\'e} (2014).
\newblock Intrinsic inference difficulties for trait evolution with
  {Ornstein--Uhlenbeck} models.
\newblock {\em Methods in Ecology and Evolution\/}~{\em 5\/}(11), 1133--1146.

\bibitem[\protect\citeauthoryear{Ho, Dinh, and Nguyen}{Ho
  et~al.}{2019}]{ho2019multi}
Ho, L. S.~T., V.~Dinh, and C.~V. Nguyen (2019).
\newblock Multi-task learning improves ancestral state reconstruction.
\newblock {\em Theoretical Population Biology\/}~{\em 126}, 33--39.

\bibitem[\protect\citeauthoryear{Ho and Susko}{Ho and
  Susko}{2021}]{ho2021ancestral}
Ho, L. S.~T. and E.~Susko (2021).
\newblock Ancestral state reconstruction with large numbers of sequences and
  edge-length estimation.
\newblock {\em arXiv preprint arXiv:2104.00151\/}.

\bibitem[\protect\citeauthoryear{Lancaster}{Lancaster}{1957}]{lancaster1957some}
Lancaster, H.~O. (1957).
\newblock Some properties of the bivariate normal distribution considered in
  the form of a contingency table.
\newblock {\em Biometrika\/}~{\em 44\/}(1/2), 289--292.

\bibitem[\protect\citeauthoryear{Li, Steel, and Zhang}{Li
  et~al.}{2008}]{li2008more}
Li, G., M.~Steel, and L.~Zhang (2008).
\newblock More taxa are not necessarily better for the reconstruction of
  ancestral character states.
\newblock {\em Systematic Biology\/}~{\em 57\/}(4), 647--653.

\bibitem[\protect\citeauthoryear{Liberles}{Liberles}{2007}]{liberles2007ancestral}
Liberles, D.~A. (2007).
\newblock {\em Ancestral sequence reconstruction}.
\newblock Oxford University Press on Demand.

\bibitem[\protect\citeauthoryear{Maddison}{Maddison}{1994}]{maddison1994phylogenetic}
Maddison, D.~R. (1994).
\newblock Phylogenetic methods for inferring the evolutionary history and
  processes of change in discretely valued characters.
\newblock {\em Annual Review of Entomology\/}~{\em 39\/}(1), 267--292.

\bibitem[\protect\citeauthoryear{Revell}{Revell}{2014}]{revell2014ancestral}
Revell, L.~J. (2014).
\newblock Ancestral character estimation under the threshold model from
  quantitative genetics.
\newblock {\em Evolution\/}~{\em 68\/}(3), 743--759.

\bibitem[\protect\citeauthoryear{Steel}{Steel}{2016}]{steel2016phylogeny}
Steel, M. (2016).
\newblock {\em Phylogeny: discrete and random processes in evolution}.
\newblock SIAM.

\bibitem[\protect\citeauthoryear{Steel and Rodrigo}{Steel and
  Rodrigo}{2008}]{steel2008maximum}
Steel, M. and A.~Rodrigo (2008).
\newblock Maximum likelihood supertrees.
\newblock {\em Systematic Biology\/}~{\em 57\/}(2), 243--250.

\bibitem[\protect\citeauthoryear{Thornton}{Thornton}{2004}]{thornton2004resurrecting}
Thornton, J.~W. (2004).
\newblock Resurrecting ancient genes: experimental analysis of extinct
  molecules.
\newblock {\em Nature Reviews Genetics\/}~{\em 5\/}(5), 366--375.

\bibitem[\protect\citeauthoryear{Vajda}{Vajda}{1970}]{vajda1970note}
Vajda, I. (1970).
\newblock Note on discrimination information and variation (corresp.).
\newblock {\em IEEE Transactions on Information Theory\/}~{\em 16\/}(6),
  771--773.

\end{thebibliography}

\end{document}